\newcommand{\F}{\ensuremath{\mathbb{F}_2}}
\newcommand{\prob}[1]{\ensuremath{\mathsf{Pr}\left[#1\right]}}
\newtheorem{theorem}{Theorem}
\newtheorem{lemma}{Lemma}
\theoremstyle{plain}\newtheorem{definition}{Definition}[section]
\newcommand{\SIGscheme}{\schemefont{S}}
\newcommand{\Skg}{\algfont{KG}}
\newcommand{\Ssign}{\algfont{Sign}}
\newcommand{\Svf}{\algfont{Vfy}}
\newcommand{\IBSIGscheme}{\schemefont{IBS}}
\newcommand{\IBSkg}{\algfont{MKG}}
\newcommand{\IBSderive}{\algfont{UKg}}
\newcommand{\IBIscheme}{\schemefont{IBI}}
\newcommand{\IBIKg}{\algfont{MKg}}
\newcommand{\IBIP}{\algfont{\overline{P}}}
\newcommand{\IBIV}{\algfont{\overline{V}}}
\newcommand{\IBIderive}{\algfont{UKg}}
\def\CP{\overline{\mathsf{CP}}}
\def\CV{\overline{\mathsf{CV}}}
\newcommand{\heading}[1]{{\vspace{5pt}\noindent\sc{#1}}}
\DeclareMathAlphabet{\mathscript}{OT1}{pzc}{m}{it}
\newcommand{\schemefont}[1]{{\mathscript{#1}}}
\newcommand{\algfont}[1]{{\mathsf{#1}}}
\newcommand{\varfont}[1]{\mathit{#1}}
\def\sig{\varfont{sig}}
\def\PK{\varfont{PK}}
\def\PK{\varfont{PK}}
\def\SK{\varfont{SK}}
\def\msk{\varfont{msk}}
\def\mPK{\varfont{mpk}}
\def\sk{\varfont{SK}}
\def\UKlist{UK^{\hbox{list}}}
\def\hlist{h^{\hbox{list}}}
\def\Llist{\Lambda}
\newcommand{\ID}{{\cal ID}}
\newcommand{\id}{\mathit{id}}
\def\Binary{\mathsf{Binary}}
\def\Goppa{\mathsf{Goppa}}
\def\IDch{id^\star}
\def\N{\mathbb{N}}
\def\A{\mathcal{A}}
\def\C{\mathcal{C}}
\def\P{\mathsf{P}}
\def\F{\mathbb{F}}
\def\V{\mathsf{V}}
\def\acc{\mathsf{acc}}
\def\rej{\mathsf{rej}}
\def\Kg{\mathsf{Kg}}
\def\D{\mathcal{D}}
\def\O{\mathcal{O}}
\def\Run{\mathbf{Run}}
\def\getsr{\stackrel{R}{\leftarrow}}
\begin{document}
\title[Improved identity-based identification using correcting codes]{%
       Improved identity-based identification using correcting codes}

\author[CGGG]{
      Pierre-Louis Cayrel$^1$ 
      \authorinfo{
      1 - Université de Paris 8, LAGA,
      Département de Mathématiques,
      2, rue de la liberté,
      93526 Saint-Denis cedex 02, France,
      email: \mbox{cayrelpierrelouis@gmail.com}}
      \and
      Philippe Gaborit$^2$ 
      \authorinfo{
      2 - Université de Limoges, XLIM-DMI,
      123, Av. Albert Thomas 87060 Limoges Cedex France,
      email: \mbox{philippe.gaborit@xlim.fr}}
      \and
      David Galindo$^3$
      \authorinfo{
            3 - University of Luxembourg 6, rue Richard Coudenhove-Kalergi
L-1359 Luxembourg
      email: \mbox{david.galindo@uni.lu}}
      \and
      and Marc Girault$^4$ 
      \authorinfo{
      4 - Orange Labs
      42, rue des Coutures 14066 Caen France,
      email: \mbox{marc.girault@orange-ftgroup.com}}
  }
\journal{IEEE Trans.\ on Information\ Theory}
\firstpage{1}

\maketitle

\begin{abstract}
In this paper, a new identity-based identification scheme based on error-correcting codes is proposed.

Two well known code-based schemes are combined~: the signature scheme by Courtois, Finiasz and Sendrier and an identification scheme by Stern.

A proof of security for the scheme in the Random Oracle Model is given.

\end{abstract}

\begin{keywords}
Identification, Identity-based Cryptography, Correcting codes, Stern, Niederreiter.
\end{keywords}

\section{Introduction}
\PARstart One of the most critical points of public key cryptography (PKC) is
that of the management of the authenticity of the public key. It is the very single point that anchors public key cryptography to the real world. If no such a mechanism is provided the consequences are fatal. In fact, if Alice is able to take Bob's identity by faking her own
public key as Bob's one, she would be able to decipher all messages
sent to Bob or to sign any message on behalf of Bob.

\indent In 1984, Shamir introduced the concept of Identity-based Public Key Cryptography ID-PKC
\cite{AS} in order to simplify the management and the identification of
the public key, which, time passing by, had become more and more complex.

\indent In ID-PKC the public key of an user is obtained from his identity $\id$ on the network. The identity $\id$ can be a
concatenation of any publicly known information that singles out the user~: a name, an e-mail,
or a phone number, to name a few.
Hence it is not longer necessary to verify a certificate for the public key
nor to access a public directory to obtain a certificate.
At first glance it seems simple but producing private keys becomes
more complex. In particular a user can not build his own
private key by himself anymore, and it is necessary to introduce a
trusted third party who constructs the private key from the user's identity and sends it to the user. This process has to be done at least once for each
user.

\indent Shamir \cite{AS}  calls this trusted third party the Key Generation
Center (KGC). The KGC is the owner of a system-wide secret, thus called the \emph{master key}. After
successfully verifying (by non-cryptographic means) the identity of the user, the KGC
computes the corresponding user private key from the master key, the user identity $\id$ and a trapdoor function.

\indent Identity-based systems resemble ordinary public-key systems, in the sense that both involve a private transformation (i.e. decrypting) as well as a
public transformation (i.e. encrypting).  However, in identity-based systems users do not have explicit public keys. Instead,
the public key is effectively replaced by (or constructed from) a user's publicly available
identity information.

The motivation behind identity-based systems is to create a cryptographic system resembling
an ideal mail system.  In this ideal system, knowledge of a person's name alone suffices for confidential mailing to that person, and for signature verification that only that person
could have produced. In such an ideal cryptographic system~:
\begin{enumerate}
\item  users need not exchange neither symmetric keys nor public keys;
\item  public directories (databases containing public keys or certificates) need not be kept;
\item  the services of a trusted authority are needed solely during a set-up phase (during
which users acquire authentic public system parameters).
\end{enumerate}

\medskip

A drawback in many concrete proposals of identity-based systems is that the required user-specific identity data includes additional data, taking the form of an integer or public data value for instance, denoted DA, beyond an a priori identity ID. Ideally, DA is not required, as a primary motivation for identity-based schemes is to eliminate the need to transmit public keys, to allow truly non-interactive protocols with identity information itself sufficing as an authentic public key. We will refer to the latter systems as \emph{pure identity-based} systems. The issue is less significant in signature and identification
schemes where the public key of a claimant is not required until receiving a message
from that claimant (in this case DA is easily provided); but in this case, the advantage
of identity-based schemes diminishes. It is more critical in key agreement and public-key
encryption applications where another party's public key is needed at the outset.

\indent In his paper Shamir proposed identity-based signature and identification systems based on the RSA or Discrete Logarithm problems.
The first efficient provably secure identity-based encryption cryptosystem featuring the above mentioned non-interactive property  was proposed in 2001
by Boneh and Franklin \cite{MFDB}. This system is based on the Weil pairing
over certain families of elliptic curves. The same year, Cocks \cite{CC}
published a system based on quadratic residuosity but a rather large message expansion makes it somewhat inefficient in practice.

\medskip

\indent Following the paper by Boneh and Franklin, research on ID-PKC has made great advances
and lots of schemes have been published, most of them based on
elliptic curves and bilinear pairings, such as identity-based encryption
(IBE) schemes \cite{5}, identity-based key agreement
schemes \cite{6}, identity-based identification (IBI) or identity-based signature (IBS) schemes \cite{11,13,14}. 
In 2004  Bellare, Neven and Namprempre
proposed in \cite{BNN} a general framework deriving
IBI or IBS from traditional public key-based signature and identification schemes and
they applied it to concrete known schemes. The resulting systems are not pure identity-based and only schemes based on number theoretic problems were considered.

In this paper, we propose and formally study a new IBI scheme built from error-correcting codes.

Code-based cryptography was introduced by McEliece \cite{McE78}, a variation of which was later proposed by Niederreiter \cite{Nied}.
The idea of using error-correcting codes for identification purposes
is due to Harari \cite{har88}, followed by Stern (first protocol) and Girault \cite{gir90}.
But Harari and Girault protocols were subsequently broken, while Stern's
one was five-pass and unpractical.
At Crypto'93, Stern proposed a new scheme \cite{Ste93}, which is still today the
reference in this area.

For a long time no code-based signature scheme was known, eventually the first (not yet cryptanalyzed) one was proposed by Courtois, Finiasz and Sendrier \cite{CFS} (CFS) in 2001.
The basic idea of the CFS signature scheme is to choose parameters
such that an inversion of the otherwise non-invertible Niederreiter scheme is feasible. This is done at the cost of a rather large public key when comparing to
other signature schemes. Still signature length is short.

We obtain our new IBI scheme by combining the CFS signature scheme and the identification
scheme by Stern. The basic idea of our scheme is to start from a Niederreiter-like problem which can be inverted like in the CFS scheme. This permits to associate
a secret to a random (public) value obtained from the identity of the user.
The secret and public values are then used for the Stern
zero-knowledge identification scheme.

\indent The paper is organized as follows.
In Section \ref{sec:defs} we introduce notation and definitions,
while in Section \ref{Pb} we recall basic facts on code-based cryptography. Section \ref{CFS} is devoted to describe the public key encryption scheme of Niederreiter and the signature scheme of Courtois, Finiasz and Sendrier. The identification protocol of Stern is presented in Section \ref{Stern}, and next   our new protocol is described in Section \ref{scheme}.
In Section \ref{sec:proof} we give a proof of security for our scheme in the Random Oracle Model \cite{BR}.

Finally in Section \ref{sec:efficiency}
we give concrete parameters and conclude in Section \ref{sec:conclusion}.
\medskip

\noindent\textbf{Publication info.} This is the full version of a previously publish conference extended abstract \cite{CGG07}.

\section{Notation and definitions} \label{sec:defs}
\PARstart We first introduce some notation. If $x$ is a string, then $|x|$ denotes its length,
while if $S$ is a set then $|S|$ denotes its cardinality. If $\kappa \in \N$ then $1^\kappa$ denotes the
string of $\kappa$ ones.

\medskip

If $S$ is a set then $s \getsr S$ denotes the operation of picking an element $s$ in $S$
uniformly at random. Unless otherwise indicated, algorithms are modelled as Probabilistic
Polynomial Time (PPT) algorithms. We write $\A(x,y, \ldots)$ to indicate that $\A$ is an
algorithm with inputs $x,y, \ldots $ and by $z \gets \A(x,y, \ldots)$ we denote the
operation of running $\A$ with inputs $(x,y, \ldots )$ and letting $z$ be the output. We
write $\A^{\O_1, \O_2, \ldots}(x,y, \ldots)$ to indicate that $\A$ is an algorithm with
inputs $x,y, \ldots $ and access to oracles $\O_1, \O_2, \ldots$ and by $z \gets
\A^{\O_1, \O_2, \ldots}(x,y, \ldots)$ we denote the operation of running $\A$ with inputs
$(x,y, \ldots )$ and access to oracles $\O_1, \O_2, \ldots$ and letting $z$ be the
output.

\medskip

\heading{Provers and verifiers.} An interactive algorithm is a stateful PPT algorithm
that on input an incoming message $M_{\hbox{in}}$ (this is $\varepsilon$ if the party is
initiating the protocol) and state information $St$ outputs an outgoing message
$M_{\hbox{out}}$ and updated state $St$. The initial state contains the initial inputs
of the algorithm. We say that $\A$ accepts if $M_{\hbox{out}}=\acc$ and rejects if $M_{\hbox{out}}=\rej$. An
interaction between a prover $\P$ and a verifier $\V$, both modelled as interactive
algorithms, ends when $\V$ either accepts or rejects. The expression~:

$$(C,d) \leftarrow \Run [\P(p_1,\ldots) \leftrightarrow\V(v_1,\ldots)]$$

denotes that $\P$ and $\V$ have initiated in an interaction with inputs
$p_1,\ldots$ and $v_1,\ldots$ respectively, getting a conversation transcript $C$ and a
boolean decision $d$, with 1 meaning that $\V$ accepted, and 0 meaning it rejected.

\medskip

\heading{Standard identification schemes.} A standard identification scheme $\SIGscheme=
(\Kg, \P, \V)$ consists of three PPT algorithms~:

\begin{itemize}
\item[{\bf Key generation}] algorithm $\Kg$ takes as input a security parameter $\kappa$ and returns
a secret key $\SK$ and a matching public key $\PK$. We use the notation $(\SK, \PK)
\leftarrow \Kg(1^\kappa)$.
\item[{\bf Interactive identification}] protocol, where the prover runs $\P$ with initial
state $\SK$, while the verifier has initial state $\PK$. It is required that for all $\kappa
\in\N$ and valid key pairs $(\PK,\SK)$, the output by $\V$ in any interaction between
$\V$ (with input $\PK)$ and $\P$ (with input $\SK)$ is $\acc$ with probability one.
\end{itemize}

\medskip

\medskip

\heading{Standard Signatures.} A standard signature scheme $\SIGscheme= (\Skg, \Ssign,
\Svf)$ consists of three PPT algorithms~:

\begin{itemize}
\item[ {\bf Key generation}] algorithm $\Skg$ takes as input a security parameter $\kappa$ and returns
a secret key $\SK$ and a matching public key $\PK$. We use the notation $(\SK, \PK)
\leftarrow \Skg(1^\kappa)$.
\item[{\bf Signing}] algorithm $\Ssign$ takes as input a secret key $\SK$ and a message $m$.
The output is a signature $\sig_\SK(m)$. This is denoted as $\sig_\SK(m) \leftarrow
\Ssign(\SK,m)$.
\item[{\bf Verification}] algorithm $\Svf$ takes as input a public key $\PK$, a message $m$,
and a signature $\sig=\sig_\SK(m)$.  The output is 1 if the signature is valid, or 0
otherwise.
We use the notation $\{0,1\} \leftarrow \Svf(\PK,m,\sig)$ to refer to one
execution of this algorithm.
\end{itemize}

\medskip

The standard security notion for signature schemes is unforgeability against
adaptively-chosen message attacks, which can be found in~\cite{GMR88}.

\medskip

\heading{Identity-Based identification.} \label{IBS} An identity-based identification scheme
$\IBSIGscheme= (\IBIKg, \IBIderive, \IBIP, \IBIV)$ consists of four PPT algorithms, as follows~:

\begin{itemize}
\item[{\bf Master-key generation}] algorithm $\IBIKg$ takes as input a security parameter $\kappa$ and returns, on one hand, the system public parameters $\mPK$ and, on the other hand, the matching
master secret key $\msk$, which is known only to a master entity. It is denoted as
$(\mPK, \msk) \leftarrow \IBSkg(1^\kappa)$.

\item[{\bf Key extraction}] algorithm $\IBSderive$ takes as inputs the master secret
key $msk$ and an identity $\id \in \{0,1\}^*$, and returns a secret key $\sk[\id]$. We
use the notation $\sk[\id] \leftarrow \IBSderive(\msk,\id)$.

\item[{\bf Interactive identification}] protocol, where the prover with identity $id$ runs the interactive algorithm $\IBIP$ with initial state $\sk[id]$, and the verifier runs $\IBIV$ with initial state $\mPK,id$.
\end{itemize}

\medskip

\heading{Security of IBI schemes.} An IBI scheme is said to be secure against impersonation under passive attacks ({\sf imp-pa}) if any adversary $\mathcal{A}=(\CP,\CV)$, consisting of a cheating prover $\CP$ and a cheating verifier $\CV$, has a negligible advantage in the following game~:

    \textit{Setup} The challenger takes a security parameter $\kappa$ and runs the master key generation algorithm $\IBIKg$. It gives $\mPK$ to the adversary and keeps the master secret key $\msk$ to itself. It initializes an empty list $\UKlist$.

    \textit{Phase 1} The adversary issues queries of the form
    \begin{itemize}
        \item[--] User key query $\langle \ID_i \rangle.$ The challenger checks whether         there exists an entry $(id_i,\sk[id_i])$ in the list  $\UKlist$. If this is the case, it retrieves the user secret key $\sk[id_i]$. Otherwise, it runs
        algorithm $\IBIderive$ to generate the private key $\sk[id_i]$ corresponding to $id_i$.
        It sends $\sk[id_i]$ to the adversary.         It includes the entry $(id_i,\sk[id_i])$ in the list  $\UKlist$.
        \item[--] Conversation query $\langle \ID_i \rangle.$ The challenger checks whether         there exists an entry $(id_i,\sk[id_i])$ in the list  $\UKlist$. If this is the case, it retrieves the user secret key $\sk[id_i]$. Otherwise, it runs
        algorithm $\IBIderive$ to generate the private key $\sk[id_i]$ corresponding to $\ID_i$. The challenger returns $(C,d)$ where $(C,d) \leftarrow \Run [\CP(\sk[id_i]) \leftrightarrow\V(\mPK,id_i)]$.
    \end{itemize}

    \medskip

    These queries may be asked adaptively, that is, each query may depend on the answers obtained to the
    previous queries.

    \medskip

    \textit{Challenge} The cheating verifier $\CV$ outputs a target
    identity $\IDch$ and its state $St_{\CV}$, such that the private key for $\IDch$ was not requested in Phase 1.

    \medskip

    \textit{Phase 2}  The cheating prover $\CP$, with input $St_{\CV}$,  interacts with a honest verifier with input $\mPK,\IDch$. The cheating prover is allowed to query the same oracles as in Phase 1, except that the query $\IDch$ is not allowed. Finally, $\A$ wins if the output of $\V$ is accept, i.e. $d=1$ in $(C,d) \leftarrow \Run [\CP(\sk[id_i]) \leftrightarrow\V(\mPK,id_i)]$.

    \medskip

    Such an adversary is called an {\sf imp-pa} adversary $\A$, and its advantage is defined as
    $\mathsf{Adv}^{\mathsf{imp-pa}}_{\IBIscheme,\mathcal{A}} (1^\ell)=\Pr[d=1].$

\bigskip

\section{Code-based cryptography}\label{Pb}
\PARstart In this section we recall basic facts about code-based cryptography.
We refer to the work of Sendrier \cite{S02b} for
a general introduction to these problems. 

\subsection{Hard problems}
Every public key cryptosystem relies on

a hard problem.
In the case of coding theory, the main hard problems used are the Bounded Decoding (BD) and Code Distinguishing (CD) problems.

\begin{definition}[Bounded Decoding  Problem]

  \label{def:BD}
Let $n$ and $k$ be two integers such that $n\geq k$ and $H$ a parity check matrix.
$\Binary(n,k)$ represents a random binary matrix of $n$ columns, $k$ rows and of rank $k.$
     \\
  \noindent\textsf{Input~:} $H \getsr \Binary(n,k)$ and
 $s\getsr\mathbb{F}_2^{n-k}$ \par\noindent\textsf{Ouput~: } A word
  $e\in\mathbb{F}_2^n$ such that $\textsf{wt}(e)\le \frac{n-k}{\log_2 n}$ and \\
  $He^T = s$
\end{definition}
  Let us denote by   $\mathsf{Adv}^{\mathsf{BD}}_{\mathcal{C}} (n,k)$ the probability that an algorithm $\C$ has in solving the above problem.

\noindent This problem was proven to be NP-complete in \cite{BMT78}.

\medskip

\begin{definition}[Code Distinguishing Problem]
 
  \label{def:GPDD}
  Let $n$ and $k$ be two integers such that $n\geq k$ and $H$ a parity check matrix. \\
  \noindent\textsf{Input~:} $H \getsr \Goppa(n,k)$ or $H \getsr \Binary(n,k)$.

  \par\noindent\textsf{Ouput~: } $b=1$ if $H\in \Goppa(n,k)$, $b=0$ otherwise.

\begin{eqnarray*}
\mathsf{Adv}^{\mathsf{CD}}_{\mathcal{D}} (n,k) &=& \big\lvert\Pr[\D(H)=1 \;|\; H \getsr \Goppa(n,k)]\\
& &-\Pr[\D(H)=1 \;|\; H \getsr \Binary(n,k)]\big\rvert.
\end{eqnarray*}
\end{definition}

The description of a Goppa code $\Goppa(n,k)$ of length $n$ and dimension $k$ is to be found in \cite{MS77}.

\subsection{McEliece scheme}
\medskip

\noindent [{\bf Key Generation}]
 Let $\mathcal{C}$ be a $q$-ary linear code $t$-correcting of length $n$
and of dimension $k.$ We denote $\mathcal{C}(n,k,d)$ a such code. Let $G$ a generator matrix of $\mathcal{C}.$ We will use an $G'$ matrix such that ~:
\begin{displaymath}
  G'=SHP\left\{
  \begin{array}{l}
    S \textrm{ is invertible}\\
    P \textrm{ is a permutation matrix}
  \end{array}\right.
\end{displaymath}
 $G'$ is public and its decomposition and a syndrome decoding algorithm for $\mathcal{C}$ are secret. To be clearer, we recall the various sizes of the matrices~:

 $S$ is $n-k\times n-k,$ $H$ is $n\times n-k,$ $P$ is $n\times n.$
\\

\noindent [{\bf Encryption}] Let $E_{q,n,t}$ bet the space of $\mathbb{F}_{q}^{n}$ words with Hamming weight $t$. For a chosen cleartext $x \in E_{q,n,t}$, $y$ is the cryptogram corresponding to $x$ if and only if $y=xG'+e.$\\

\noindent [{\bf Decryption}] For $y=xG'+e,$  the knowledge of the secret key allows ~:
\begin{enumerate}
\item to compute $u=yP^{-1},;$
\item to find $u'$ from $u$ thanks to a syndrome decoding algorithm;
\item to find $x=u'S^{-1}.$
\end{enumerate}

The syndrome decoding algorithm can be, for instance, in the case of Goppa's
codes, Patterson's algorithm (see part \ref{complexity}).

 \subsection{Cryptanalytic Attacks}
 The security of code-based cryptosystems depends on the difficulty of
the following two attacks~:
\begin{itemize}
\item[(i)] {\em{Structural Attack}} ~: Recover the secret transformation and the description
of the secret code(s) from the public matrix.
\item[(ii)] {\em{Ciphertext-Only Attack}}~: Recover the original message from the
ciphertext and the public key.
\end{itemize}

\subsubsection{Structural Attack}

\indent While no efficient algorithm for decomposing $G'$ into $(S, G, P)$ has been discovered yet
\cite{50}, a structural attack has been discovered in \cite{45}. This attack reveals part of the structure
of a so-called weak $G'$ where 'weak' means that $G'$ has been generated from a binary Goppa polynomial in a special manner. However, this attack can
be avoided simply by not using such weak public keys.

\medskip

Structural attacks
aim at recovering the structure of the
permuted code, i.e. recovering the permutation
from the code and its permuted version. The underlying problem
is the equivalence of codes. This problem was considered
by Sendrier for which he gave a nice solution~: the
Support Splitting Algorithm \cite{S02b}.

    \medskip

The complexity
of this algorithm is in $\mathcal{O}(2^{\text{dimension}(\mathcal{C} \cap \mathcal{C}^{\perp})})$ where $\mathcal{C}^{\perp}$ is the dual of the code $\mathcal{C}.$
This means that in order to resist the attack one gets two options~: either
starting from a large family of codes with arbitrary small hulls (the intersection
of $\mathcal{C}$ and $\mathcal{C}^{\perp}$) or starting from a small family of codes but with a large hull.

    \medskip

For instance the choice of Goppa codes corresponds to the first possibility.

\subsubsection{Ciphertext-Only Attack}
 A first analysis using the Information-Set-Decoding was done
by McEliece, then by Lee and Brickell, Stern  and Leon
and lastly by Canteaut and Chabaud (see \cite{CC2} for all references).

\medskip

The Information-Set-Decoding Attack is one of the known general attacks (i.e., not
restricted to specific codes) and seems to have the lowest complexity.

    \medskip

One tries to recover the
$k$ information symbols as follows~: the first step is to pick $k$ of the $n$ coordinates randomly in
the hope that none of the $k$ are in error. We then try to recover the message by solving the
$k \times k$ linear system (binary or over $\mathbb{F}_q$).
Let $G'_k, c_k$ and $z_k$ denote the $k$ columns picked from $G',c$ and $z,$ respectively. They have
the following relationship
$$c_k = mG'_k + z_k.$$
If $z_k = 0$ and $G'_k$ is non-singular, $m$ can be recovered by
$$m = c_k{G'}_{k}^{-1}.$$

\medskip

The computation cost of this version 
is $T(k) \times P_{n,k,t},$ where
$$P_{n,k,t} =\Pi_{i=0}^{k-1}(1-\frac{t}{n-i}).$$

The quantity $T(k)$ in the average work factor is the number of operations required to solve a
$k \times k$ linear system over $\mathbb{F}_q$. As mentioned in \cite{McE78}, solving a $k \times k$ binary system takes about
$k^3$ operations. Over $\mathbb{F}_q$, it would require at least $(k
\times \log_2 q)^3$ operations.

\medskip

All the papers which improve the complexity only impact the cost of the Gaussian elimination.
In the best improvement by Canteaut and Chabaud \cite{CC2} a good approximation
of the cost besides the probability factor can be taken roughly in $(k\times \log_2 q)^2$.

\indent Apart from these general attacks there are some attacks targeting McEliece cryptosystem
using specific codes(see \cite{SS92,45,5,Survey} for exemple).

\section{Signature scheme of Courtois, Finiasz and Sendrier (or CFS scheme)}
\label{CFS}

\PARstart Before describing the CFS scheme we first recall the Niederreiter public key cryptosystem.

\subsection{Niederreiter encryption scheme}

\noindent [{\bf Key Generation}] Let $\mathcal{C}$ be a binary linear code $t$-correcting of length $n$
and of dimension $k.$
Let $H$ a parity check matrix of $\mathcal{C}.$
We will use an $\widetilde{H}$ matrix such that ~:
\begin{displaymath}
  \widetilde{H}=QHP\left\{
  \begin{array}{l}
    Q \textrm{ is invertible}\\
    P \textrm{ is a permutation matrix}
  \end{array}\right.
\end{displaymath}
 $\widetilde{H}$ is public and its decomposition and a syndrome decoding algorithm for $\mathcal{C}$ are secret.

 To be clearer, we recall the various sizes of the matrices~:

 $Q$ is $n-k\times n-k,$ $H$ is $n\times n-k,$ $P$ is $n\times n.$

Let $E_{q,n,t}$ bet the space of $\mathbb{F}_{q}^{n}$ words with Hamming weight $t$.\\

\noindent [{\bf Encryption}] For a chosen cleartext $x$ in $E_{q,n,t}$,
$y$ is the cryptogram corresponding to $x$ if and only if $y=\widetilde{H}x^{T}.$\\

\noindent [{\bf Decryption}] For $y=\widetilde{H}x^{T},$ the knowledge of the secret key
allows ~:
\begin{enumerate}
\item to compute $Q^{-1}y\ (=HPx^{T});$
\item to find $Px^{T}$ from $Q^{-1}y$ thanks to a syndrome decoding algorithm;
\item to find $x$ applying $P^{-1}$ to $Px^{T}.$
\end{enumerate}
The syndrome decoding algorithm can be, for instance, in the case of Goppa's
codes, Patterson's algorithm (see part \ref{complexity}).

The McEliece or the Niederreiter schemes are not naturally invertible, i.e. if one starts from a random
element $y$ of $\mathbb{F}_2^n$ and a code $\mathcal{C}[n,k,d]$ that we are able to decode
up to $d/2$, it is almost sure that we won't be able
to decode $y$ into a codeword of $\mathcal{C}$. This comes from
the fact that the density of the whole space that is decodable
is very small.

\subsection{CFS signature scheme}

The idea of the CFS scheme is to find
parameters $[n,k,d]$ that make successful the strategy of picking up random elements
until one is able to decode it  with high probability.
  More precisely, given $M$ a message to sign and $h$ a hash-function with range
$\{0,1\}^{n-k},$ we try to find a way to build $s \in \mathbb{F}_2^n$ of given
weight $t$ such
that $h(M)=\widetilde{H}s^{T}.$ For $\mathcal{D} ecode(\cdot)$ a decoding
algorithm, the CFS scheme works as follows~:\\

\noindent [{\bf Key Generation}]$\ $
\begin{enumerate}
\item Select $n$, $k$ and $t$ according to the security parameter $\kappa$.

\item  Pick a random parity check matrix $\widetilde{H}$ of a $(n, k)$-binary Goppa code decoding $t$ errors.

  \item Choose a random $(n-k)\times (n-k)$ non-singular matrix $Q$, a random $n\times n$
  permutation matrix $P$ and a hash-function $h~: \{0, 1\}^\ast
  \longrightarrow \mathbb{F}_2^{n-k}$.

  \item The public key is $H= Q\widetilde{H}P$ and the private key is $(Q, \widetilde{H}, P)$.

  \item Set $t=\frac{n-k}{\log_2 n}, i=0$.
  \end{enumerate}
\noindent [{\bf Sign}]$\ $

  \begin{enumerate}
  \item $i \leftarrow i+1$
  \item $x^\prime = \mathcal{D} ecode_{\widetilde{H}}\left(Q^{-1}h(m\|i)\right)$
  \item if no $x^\prime$ was found go to 1
  \item output $(i, x^\prime P)$
  \end{enumerate}
\noindent [{\bf Verify}] Compute $s^\prime = H{x^\prime}^T$ and $s = h(m \| i)$.
  The signature is valid if $s $ and $s^\prime$ are equal.

\medskip

\indent We get at the end an $\{s,j\}$ couple, such that~: $$h(M\oplus j)=\widetilde{H}s^{T}.$$ Let us notice that we can suppose that $s$ has
 weight $t=[d/2].$
In \cite{Dallot}, a proof of security in the Random Oracle Model for a modified version of the CFS scheme is given. We use the modified CFS scheme described there, and named as \textsf{mCFS}, as a building block for our scheme. The \textsf{mCFS} scheme is explained next.

\subsection{Modified CFS signature scheme}

\smallskip

\noindent [{\bf Key Generation}]$\ $
\begin{enumerate}
\item Select $n$, $k$ and
  $t$ according to $\kappa$.
  \item Pick a random parity check matrix $\widetilde{H}$
  of a $(n, k)$-binary Goppa code decoding $t$ errors.
  \item Choose a random $(n-k)\times (n-k)$ non-singular matrix $Q$, a random $n\times n$
  permutation matrix $P$ and a hash-function $h~: \{0, 1\}^\ast
  \longrightarrow \mathbb{F}_2^{n-k}$.
  \item The public key is $H= Q\widetilde{H}P$ and the
  private key is $(Q, \widetilde{H}, P)$.
  \item Set $t=\frac{n-k}{\log_2 n}$.
  \end{enumerate}
\noindent [{\bf Sign}]$\ $

  \begin{enumerate}
  \item $i \stackrel{R}{\leftarrow} \{1,\dots,2^{n-k}\}$
  \item $x^\prime = \mathcal{D} ecode_{\widetilde{H}}\left(Q^{-1}h(m\|i)\right)$
  \item if no $x^\prime$ was found go to 1
  \item output $(i, x^\prime P)$
  \end{enumerate}

\noindent [{\bf Verify}]Compute $s^\prime = H{x^\prime}^T$ and $s = h(m \| i)$.
  The signature is valid if $s $ and $s^\prime$ are equals.

\section{Stern's protocol}\label{Stern}
\PARstart Stern's scheme is an interactive zero-knowledge protocol which aims
at enabling a {\em prover} $\P$ to identify himself to a {\em verifier} $\V$.

\medskip

Let $n$ and $k$ be two integers such that $n\geq k$. Stern's scheme
assumes the existence of a public $(n-k)\times n$ matrix
$\widetilde{H}$ defined over the two elements field
$\mathbb{F}_{2}$. It also assumes that an integer $t\leq n$ has been
chosen. For security reasons (discussed in \cite{Ste93}) it is
recommended that $t$ is chosen slightly below the
so-called Gilbert-Varshamov bound (see \cite{MS77}). The matrix
$\widetilde{H}$ and the weight $t$ are protocol parameters and may
be used by several (even numerous) different provers

\medskip

Each prover $P$ receives a $n$-bit secret key $\SK$ (also denoted by
$s$ if there is no ambiguity about the prover) of Hamming weight $t$
and computes a {\em public identifier} $\PK$ such that
$i_P=\widetilde{H}\SK^{T}$. This identifier is calculated once in
the lifetime of $\widetilde{H}$ and can thus be used for several
identifications. When a user $P$ needs to prove to $V$ that he is
indeed the person associated to the public identifier $\PK$, then
the two protagonists perform the following protocol where $h$
denotes a standard hash-function~:

\medskip

\noindent   [\textbf{Commitment Step}] $P$ randomly chooses $y\in \F_2^n$ and a permutation $\sigma$ of $\{1,2,\ldots,n\}.$
      Then $P$ sends to $V$ the commitments $c_{1}$, $c_{2}$ and $c_{3}$ such that ~:
      $$c_{1}=h(\sigma\|\widetilde{H}y^{T});\ c_{2}=h(\sigma(y));\ c_{3}=h(\sigma(y\oplus \SK)),$$
     where $h(a\|b)$ denotes the hash of the concatenation of the sequences $a$ and $b$.\\

\noindent    [\textbf{Challenge Step}] $V$ sends $b \in \{0,1,2\}$ to $P$.\\

\noindent    [\textbf{Answer Step}] Three possibilities~:
      \begin{itemize}
      \item if $b=0~:$ $P$ reveals $y$ and $\sigma.$
      \item if $b=1~:$ $P$ reveals $(y\oplus \SK)$ and $\sigma.$
      \item if $b=2~:$ $P$ reveals $\sigma(y)$ and $\sigma(\SK).$ \\
      \end{itemize}

\noindent    [\textbf{Verification Step}] Three possibilities~:
      \begin{itemize}
      \item if $b=0~:$ $V$ verifies that $c_{1},c_{2}$ are correct.
      \item if $b=1~:$ $V$ verifies that $c_{1},c_{3}$ are correct.
    \item if $b=2~:$ $V$ verifies that $c_{2},c_{3}$ are correct,
    and that the weight of $\sigma (s)$ is $t$.\\
      \end{itemize}

\noindent    [\textbf{Soundness Amplification Step}] Iterate the above steps until the expected security level is reached.
\medskip

During the fourth Step, when $b$ equals $1$, it can be noticed that
$\widetilde{H}y^{T}$ derives
    directly from $\widetilde{H}(y\oplus \SK)^{T}$ since we have~:
$$\widetilde{H}y^{T}=\widetilde{H}(y\oplus \SK)^{T}\oplus \PK =\widetilde{H}(y\oplus \SK)^{T}\oplus\widetilde{H}\SK^{T} \enspace .$$

As proved in \cite{Ste93}, the protocol is zero-knowledge and for a
round iteration, the probability that a dishonest person succeeds in
cheating is $(2/3)$. Therefore, to get a confidence level of
$\beta$, the protocol must be iterated a number of times $k$ such
that $(2/3)^{k}\leq \beta$ holds. When the number of iterations
satisfies the last condition, then the security of the scheme relies
on the NP complete problem SD.

By virtue of the so-called Fiat-Shamir Paradigm \cite{FS86}, it is
possible to convert Stern's Protocol into a signature
scheme, but the resulting signature size is long (about $140$-kbit long
for $2^{80}$ security). Notice that this is large in comparison with
classical signature schemes, but it is more or less close to the size of many
files currently used in everyday life.

\section{New Identity-based identification scheme from Stern-Niederreiter protocols}
\label{scheme}

\PARstart
We describe now the first code-based identity-based identification method. The prover is identifying herself to the verifier. Let $id_{S},id_{P}$ be
the prover and of the identifier identities respectively.

\medskip

\noindent [{\bf Master key generation}] Let $\mathcal{C},\, H,\,\widetilde{H}=QHP$ the output of the key generation algorithm of the CFS signature scheme in Section~\ref{CFS}.
Let $h$ a hash function mapping to $\{0,1\}^{n-k}.$
$\widetilde{H}$ is made public, but the decomposition of $\widetilde{H}$ is a secret of the authority.\\

\noindent [{\bf Key extraction}]
On inputs the the decomposition of $\widetilde{H}$ and the user's identity $id_{P}$ the goal of the key extraction algorithm is to output
$s\in E_{q,n,t}$ such that $h(id_{P})=\widetilde{H}s^{T}.$ However $h(id_{P})$ \textit{might not be} in the target of $x\rightarrow \widetilde{H}x^{T}.$
That is to say that  $h(id_{P})$ is \textit{not necessarily} in the space
of decodable elements of $\mathbb{F}_2^n$. That problem can be solved thanks to the
following algorithm. Given $\mathcal{D} ecode(\cdot)$ a decoding algorithm for the hidden code~:\\

  \begin{enumerate}
  \item $i \stackrel{R}{\leftarrow} \{1,\dots,2^{n-k}\}$
  \item $x^\prime = \mathcal{D} ecode_{\widetilde{H}}\left(Q^{-1}h(id_{P}\|i)\right)$
  \item If no $x^\prime$ was found go to 1
  \item output $(i, x^\prime P)$\\
  \end{enumerate}

We get at the end a couple $\{s,j\},$ such that  $h(id_{P} \| j)=\widetilde{H}s^{T}.$
We can note that we have $s$ of weight $t$ or less.\\

\noindent [{\bf Interactive identification}]
We use a slight derivation of Stern's protocol.
We suppose that the prover obtained a couple $\{s,j\}$
verifying $h(id_{P} \| j)=\widetilde{H}s^{T}.$
$h(id_{P} \| j)$ is set to be the prover's public key. Identification is then performed by modifying Stern's protocol with respect to the public key $h(id_{P} \| j)$. Details follow.\\

    [\textbf{Commitment Step}] $P$ chooses randomly any word $y$ of $n$ bits and a permutation $\sigma$ of $\{1,2,\ldots,n\}.$
      Then $P$ sends to $S~:c_{1},c_{2},c_{3},j$ such that ~:
      $$c_{1}=h(\sigma\|\widetilde{H}y^{T});\ c_{2}=h(\sigma(y));\ c_{3}=h(\sigma(y\oplus s))$$

    [\textbf{Challenge Step}] $S$ sends $b \in \{0,1,2\}$ to $P$.\\

    [\textbf{Answer Step}] Three possibilities~:
      \begin{itemize}
      \item if $b=0~:$ $P$ reveals $y$ and $\sigma.$
      \item if $b=1~:$ $P$ reveals $(y\oplus s)$ and $\sigma.$
      \item if $b=2~:$ $P$ reveals $\sigma(y)$ and $\sigma(s)$\\
      \end{itemize}

    [\textbf{Verification Step}] Three possibilities~:
      \begin{itemize}
      \item if $b=0~:$ $S$ verifies that the $c_{1},c_{2}$ received at the second round are correct.
      \item if $b=1~:$ $S$ verifies that the $c_{1},c_{3}$ received at the second round are correct.
    For $c_{1}$ we can note that $\widetilde{H}y^{T}$ derives
    directly from $\widetilde{H}(y\oplus s)^{T}$ by ~:
    $$\widetilde{H}y^{T}=\widetilde{H}(y\oplus s)^{T}\oplus\widetilde{H}s^{T}$$
      \item if $b=2~:$ $S$ verifies that the $c_{2},c_{3}$ received at the second round have really been honestly calculated,
    and that the weight of $s.\sigma$ is $t$.\\
      \end{itemize}

    [\textbf{Soundness Amplification Step}] Iterate the commitment, challenge, answer and verification steps until the expected security is reached.

\medskip

Thanks to the Fiat-Shamir heuristic \cite{FS86} it is possible
to derive an identity-based signature scheme from the above identity-based identification scheme. Since this is a well-known cryptographic result, we refer the reader to \cite{FS86,BNN} for details.

\section{Proving Security of  $\textsf{\textsf{mCFS}-Stern}$ IBI scheme}
\label{sec:proof}

\begin{theorem} The IBI scheme from Section \ref{scheme} is secure in the sense of
{\sf imp-pa}  if the BD and CD problems are hard to solve.
\end{theorem}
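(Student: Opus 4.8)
The plan is to proceed by reduction: assuming an {\sf imp-pa} adversary $\A=(\CP,\CV)$ with non-negligible advantage, I would build either a distinguisher $\D$ for the CD problem or a solver $\C$ for the BD problem. The guiding intuition follows the template for identity-based schemes obtained by composing a signature with an identification protocol: here the user secret key is an \textsf{mCFS}-style low-weight preimage $s$ satisfying $\widetilde{H}s^T=h(id\|j)$, and the interactive phase is Stern's zero-knowledge proof of knowledge of $s$. Consequently, a successful impersonator must in effect \emph{know} a valid low-weight preimage of the target hash value, and extracting that preimage solves BD.

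First I would perform a game hop justified by the CD assumption: replace the structured public matrix $\widetilde{H}=QHP$ coming from a Goppa code by a uniformly random matrix drawn from $\Binary(n,n-k)$. Any noticeable change in $\A$'s success probability immediately yields a CD distinguisher. Crucially, after this hop the reduction no longer possesses the trapdoor (the decomposition of $\widetilde{H}$), so it must answer every query without it. To answer a user-key query on $id_i$, I would program the random oracle: choose a random index $i$ and a random word $s_i$ of weight $t$, set $h(id_i\|i):=\widetilde{H}s_i^T$, and return $(i,s_i)$. This reproduces exactly the distribution of honestly generated keys, and oracle inconsistencies occur only with negligible probability. Conversation queries for identities whose key has been generated are answered honestly; for any identity whose preimage is unknown to the reduction I rely on the zero-knowledge property of Stern's protocol, running its simulator to produce transcripts without the secret.

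To embed the BD challenge $(H,s^\star)$, I would take the already-random public matrix to be $\widetilde{H}:=H$, then guess which of the (polynomially many) hash queries corresponds to the target identity $\IDch$ and program $h(\IDch\|j):=s^\star$ for that query, incurring a $1/q_h$ loss factor. Since the key for $\IDch$ is never requested and conversation queries on it are handled by the simulator, this programming stays consistent. If $\CP$ then impersonates $\IDch$ successfully against the honest verifier, I would extract a witness by the standard rewinding argument for Stern's protocol: from two accepting transcripts that share a commitment but use distinct challenges $b\in\{0,1,2\}$, one recovers a word $e$ of weight $t$ with $\widetilde{H}e^T=h(\IDch\|j)=s^\star$. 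Because $t=\tfrac{n-k}{\log_2 n}$, this $e$ meets the BD weight bound and is therefore a valid solution to the BD instance, contradicting its hardness.

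The main obstacle is the knowledge extraction combined with the $(2/3)$ per-round soundness of Stern's protocol. A single round only forces cheating probability $(2/3)$, so I would need to argue over the iterated protocol (the soundness-amplification step) that a prover succeeding with probability non-negligibly above the soundness error admits, through rewinding, a commitment answered under two different challenges, from which $e$ is reconstructed. Care is required to bound the rewinding cost, to control the probability of extraction failure, and to combine all loss factors---the CD game hop, the $1/q_h$ guess, and the gap between $\A$'s advantage and the soundness error---into a single non-negligible BD advantage. Provided each factor remains polynomial or non-negligible, the resulting $\C$ (resp.\ $\D$) breaks BD (resp.\ CD), which establishes the claimed {\sf imp-pa} security.
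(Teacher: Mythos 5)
Your overall strategy coincides with the paper's: a CD game hop that replaces the hidden Goppa structure by a random binary matrix, programming of the random oracle so that user-key queries are answered by first sampling a weight-$t$ word $s$ and setting $h(id\|i):=\widetilde{H}s^T$, a guess of which hash query corresponds to the target identity (losing a polynomial factor), zero-knowledge simulation of conversation transcripts for $\IDch$, and finally extraction of a low-weight preimage from the successful impersonator to solve BD. The paper organizes exactly this as a sequence of Games 0--5, bounds the oracle-inconsistency event by $q_E/2^{n-k}$, and delegates both the conversation simulation and the extraction to Stern's theorems, ending with $\mathsf{Adv}^{\mathsf{BD}}_{\mathcal{C}} \geq \bigl(\Pr[X_5]-(2/3)^k\bigr)^3/10$.

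The one step that is wrong as you state it is the extraction: for Stern's protocol, two accepting transcripts that share a commitment but use distinct challenges $b\in\{0,1,2\}$ do \emph{not} yield a word $e$ of weight $t$ with $\widetilde{H}e^T$ equal to the target syndrome. The pair $(b=0,b=1)$ pins down, via $c_1$, the syndrome of the difference of the two revealed words but never certifies its weight, while the pairs involving $b=2$ certify the weight of $\sigma(s)$ but do not by themselves tie the revealed $s$ to the syndrome; this is precisely why a cheating prover can always answer two of the three challenges and the per-round soundness error is $2/3$ rather than $1/2$. A correct extractor must obtain consistent answers to all three challenges for the same commitment (or else exhibit a hash collision), which is what Stern's knowledge-extraction theorem --- the result the paper invokes --- provides, at the cost of the cubic degradation $\epsilon^3/10$ in the success probability. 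With that correction, which affects the concrete bound but not the feasibility of the reduction, your argument goes through and is essentially the paper's proof.
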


\begin{proof}
A security reduction is obtained by adapting the proofs by Dallot \cite{Dallot} and Stern \cite{Stern} to our setting.
We build the proof following a sequence of games \textbf{Game 0, Game 1,} $\ldots$
\textbf{Game 0} is the original attack game, i.e the standard {\sf imp-pa} game.
Successive games are obtained by small modifications of the preceding games, in such a way that the
difference of the adversarial advantage in consecutive games is easily quantifiable. To
compute this difference, the following lemma is used~:

\begin{lemma} \label{lemma:diff}
Let $X_i$,$X_{i+1}$, $B$ be events defined in some probability distribution, and suppose that
$X_i \wedge \neg B \Leftrightarrow X_{i+1} \wedge \neg B$. Then $| \prob{X_i}-\prob{X_{i+1}}|
\leq \prob{B}$.
\end{lemma}

  Let $q_h,q_E,q_C$ denote the maximum number of queries that adversary $\A$ makes to the hash,
  user keys and conversation oracles.

\medskip

  We want to
   show there exists adversaries $\C,\D$ that break the BD and CD problems
   respectively.

\medskip

  To answer hash, user key and conversation queries, three lists $\hlist,\UKlist$ and $\Llist$ are
  maintained. If there is no value associated with an entry in a list, we denote its output by $\perp$.
  The list  $\hlist$ consists of tuples of the form $(x,s)$ indexed by $(id,i)$, where $i$ is
   an index in $\{1,\ldots,2^{n-k}\}$, $id$ is an identity, and $\widetilde{H}s^T=x=h(id,i)$ if $x\neq \perp \neq s$. The list
   $\UKlist$, consists of entries of the form $(id,sk[id])$. The list $\Llist$ contains
    indexes $\Lambda(m)$ associated to a message $m$, for which the simulator is able to produce
     a signature on $h(m,\Lambda(m))$.

\smallskip

\noindent \textbf{Game 0.} This the standard \textsf{imp-pa} game. The master public and secret keys are obtained by running algorithm $\textsf{Gen}_{\textsf{mCFS}}(1^\kappa)$ In particular, the master public key $\widetilde{H}=QHP$ plus a hash-function $h~:\{0,1\}^*\rightarrow \F_2^{n-k}$, and the master secret key is $(Q,H,P)$, where $H \getsr \Goppa(n,k)$, $Q$ is a non-singular $(n-k)\times(n-k)$ matrix and $P$ is a $n\times n$ permutation matrix.  Therefore
$\prob{X_0}=\mathsf{Adv}^{\mathsf{imp-pa}}_{\IBIscheme,\mathcal{A}}$. \smallskip

\medskip

\noindent \textbf{Game 1.}(Simulation of hash and user key queries) We change the way in which hash and user key extraction
queries are answered. For hash queries of the form $(id,i)$,  there are two situations,
depending on whether $i = \Lambda (id)$. If this is the case, a decodable syndrome
$x=\widetilde{H}s^T$ is given as the output, and the corresponding code-word $s$ is
stored, i.e. $\hlist$ is updated with $(x,s)$ in the entry indexed by $(id,i)$. If $i \neq \Lambda (id)$
hash queries are simulated by taking a random element in $\F_2^{n-k}$, and then these queries are distributed as with a random oracle. Details are shown in Figure 1.

On the other hand, user key queries on $id$ are answered by choosing the special index
$\Lambda(id)$ at random, calling the hash oracle on $(id,\Lambda(id))$ and outputting
$(s,i)$ as the resulting user secret key. Details are shown in Figure 2.

 At the end of the simulation, the random oracle $h$  has output $q_h
+ q_E +1$ syndromes. Some of them are produced with the special index $i=\Lambda(id)$;
these syndromes are not distributed uniformly at random in $\F_2^{n-k}$, instead they
have been modified as to enable responding user secret key queries. It might be then the
case that adversary $\A$ queried $h$ on some pair $(id,j)$ such that later $j$ is set to
$\Lambda(id)$. This will cause an incoherence, since then the output $h(id,j)$ will be a
random syndrome, instead of a decodable syndrome. The latter happens with probability at
most $\frac{q_E}{2^{n-K}}$ (the indexes $\Lambda(id)$ are only defined when answering key
extraction queries). Therefore,
$$
|\Pr[X_0]-\Pr[X_1]| \leq \frac{q_E}{2^{n-k}}
$$

\noindent \textbf{Game 2.}(Changing the master key generation algorithm) The key generation algorithm is changed so that $H \leftarrow
\Binary(n,k)$. Then, $$ |\Pr[X_2]-\Pr[X_1]| \leq \mathsf{Adv}^{\mathsf{CD}}_{\mathcal{D}}
(n)
$$

\noindent where $\D$ is an algorithm that simulates the environment of Game 2 for $\A$ if
$H \leftarrow \Goppa(n,k)$ and outputs $d=1$ if $\A$ successfully impersonates the target
identity $id^\star$, and $d=0$ otherwise; and $\D$ simulates the environment of Game 3
for $\A$ if $H \leftarrow \Binary(n,k)$ and outputs $d=1$ if $\A$ successfully
impersonates the target identity $id^\star$, and $d=0$ otherwise. It is easy to see that
$$
\Pr[H\stackrel{R}{\leftarrow}\textsf{Goppa}(n, k)~:\mathcal{D}(H)=1] = \Pr[X_2].
$$
and $$ \Pr[H\stackrel{R}{\leftarrow}\textsf{Binary}(n, k)~:\mathcal{D}(H)=1] = \Pr[X_3].
$$

\begin{center}
    \begin{minipage}{0.8\linewidth}

        \begin{algorithm}[H]
          \SetLine
          \KwIn{A pair $(id, i)$}
          \KwOut{A syndrome $s$}
          $(s, x) \leftarrow \hlist(m, i)$\;
          \eIf{$i\not=\Lambda(id)$}{\label{debutfacile}
            \If{$s=\perp$}{
              $x\stackrel{R}{\leftarrow}\mathbb{F}_2^{n-k}$\;
              $ \hlist(id,i)\leftarrow (x,\perp)$\;
            }
            \Return{$h(id,i) = x$}\;
          }{\label{finfacile}
            \If{$x = \perp$}{
              $s\stackrel{R}{\leftarrow} \{w\in\mathbb{F}_2^n|\textsf{wt}(w)\le t\}$\;
              $x \leftarrow \widetilde{H}s^T$\;
              $\hlist(id,i) \leftarrow (x, s)$\;
            }
            \Return{$h(id,i) = x$}\;
          }
        \end{algorithm}
    \end{minipage}
    
    \label{alg:hash}

\end{center}
 \begin{center} Fig. 1 - Simulation of hash queries \end{center}

\medskip

\begin{center}
    \begin{minipage}{0.6\linewidth}

      \begin{algorithm}[H]
        \SetLine
        \KwIn{An identity $id$}
        \KwOut{A user secret key $(s,id)$}
        \If{$\Lambda(id) = \perp$}{
          $\Lambda(id)\stackrel{R}{\leftarrow}\{1,\ldots, 2^{n-k}\}$\;
        }
        $(x, s) \leftarrow h(id, \Lambda(id))$\;
        $i \leftarrow \Lambda(id)$\;
        $\Lambda(id) \leftarrow \perp$\;
        \Return{$sk[id]= (s,i)$}\;
      \end{algorithm}
    \end{minipage}

    \label{alg:sig}

\end{center}
 \begin{center} Fig. 2 - Simulation of user key queries \end{center}

\medskip

\noindent \textbf{Game 3.}(Guessing the target identity) A random index $j^+ \getsr \{1,\ldots,q_h+q_E+q_C\}$ is taken. The $j^+$-th hash query $(id^+,i^+)$ to is set to be $Q(x^{+})^T$, where $x^{+} \getsr \F_2^{n-k}$, i.e.
$h(id^+,i^+)=Q(x^{+})^T$. The probability
space is not modified, since $x^+ \getsr \F_2^{n-k}$ and $Q$ is  non-singular, and therefore $\Pr[X_3]=\Pr[X_2]$.

\smallskip

\medskip

\noindent \textbf{Game 4.}(Abort the game)

 Let
$(id^\star,i^\star)$ be the target identity and target index that $\A$ impersonates. If
$id^\star\neq id^+$ or $i^\star\neq i^+$ then the challenger aborts the game. Since Game 4 is obtained
by conditioning Game 3 on an independent event of probability $\frac{1}{q_H+q_E+1}$ we
obtain

$$
\Pr[X_4]=\frac{\Pr[X_3] }{q_H+q_E+1}
$$

\smallskip

\medskip

\noindent \textbf{Game 5.} (Answering conversation queries on the target identity
$id^\star$) We have to answer conversation queries on $id^\star$ without knowing the code
word $s^\star$ corresponding to $h(id^\star,i^\star)=x^\star$, i.e. $s^\star$ such that
$x^\star=\widetilde{H}s^+$ and $x^\star=Q(x^{+})^T$. We can answer these queries in expected polynomial time by using the algorithm
in Theorem 3 in~\cite{Stern}. Roughly, the algorithm uses a resettable simulation~\cite{GMR89}. At the
beginning of each iteration of the basic identification protocol, the algorithm chooses
at random one out of three cheating strategies, where each strategy allows to successfully
interact with a cheating verifier $\CV$ with probability $2/3$. In case the algorithm
can not successfully interact with $\CV$, it resets the adversary $\A$ for the current
round (see~\cite{Stern} for details). All in all, the probability space is not modified,
and then $\Pr[X_5]=\Pr[X_4]$.

\medskip

Theorem 1 in~\cite{Stern} implies that an adversary $\A$ impersonating the user with identity $id^\star$ when running $k$ rounds of the basic protocol and with advantage $(2/3)^k+\epsilon_1$ for a
non-negligible $\epsilon_1>0$, can be converted into a PPT algorithm computing $s^\star$ such that $\widetilde{H}(s^{\star})^T=x^\star$ with
probability $\epsilon_1^3/10$. A basic calculation shows that $(s^+)^T=P(s^\star)^T$ is a solution to the BD problem with inputs $H \getsr \Binary(n,k)$ and $x^+\getsr F_2^{n-k}$. Let $\C$ be an algorithm that simulates Game 5 for the
impersonating adversary  $\A$ using the input of the BD problem. Then,
$$\mathsf{Adv}^{\mathsf{BD}}_{\mathcal{C}} \geq \frac{\left(\Pr[X_5] -(2/3)^k \right)^3}{10}$$

\smallskip

Collecting all the probabilities

\begin{eqnarray*}
(2/3)^k+\epsilon &\leq & \mathsf{Adv}^{\mathsf{imp-pa}}_{\IBIscheme,\mathcal{A}} \\
&\leq & \frac{q_E}{2^{n-k}} + \mathsf{Adv}^{\mathsf{CD}}_{\mathcal{D}} (n)+ \Pr[X_5](q_h+q_E+1) \\
&\leq &  \frac{q_E}{2^{n-k}} +
\mathsf{Adv}^{\mathsf{CD}}_{\mathcal{D}} (n) + \\
&& \left((\mathsf{Adv}^{\mathsf{BD}}_{\mathcal{C}})^{1/3} + (2/3)^k\right)10^{1/3}(q_h+q_E+1)
\end{eqnarray*}

and then

\begin{eqnarray*}
\epsilon &\leq &  \frac{q_E}{2^{n-k}} + \mathsf{Adv}^{\mathsf{CD}}_{\mathcal{D}} (n) + \\
&& \left((\mathsf{Adv}^{\mathsf{BD}}_{\mathcal{C}})^{\frac{1}{3}} +  (1-\frac{1}{\sqrt[3]{10}})(\frac{2}{3})^k\right)10^{\frac{1}{3}}(q_h+q_E+1)
\end{eqnarray*}

\medskip

The latter equation can be read as follows~: a successful impersonating adversary with advantage $(\frac{2}{3})^k+\epsilon $ implies a successful adversary against the BD or CD problems. 

\end{proof}

\section{Efficiency Analysis}
\label{sec:efficiency}
\PARstart We deal here with the security our protocol and its practicality.
Let us remind that in the case of Niederreiter's cryptosystem, its security
relies on the hardness of decoding of a linear
code (see section \ref{Pb}).

\subsection{Parameters and security of the scheme}

The protocol has two parts~: in the first part one inverts
the syndrome decoding problem for a matrix $\widetilde{H}$
in order to construct a private key for the prover
and in second part one applies Stern identification protocol
with the same matrix $\widetilde{H}$.
This shows that the overall parameters of the scheme
are equivalent to the security of the CFS scheme, since
the security of the Stern scheme with the same matrix
parameters is implicitly included in the signature scheme.

\medskip

In particular the scheme  has to fulfill two
imperative conditions~:
\begin{enumerate}
\item make the computation of $\{s,j\}$ (defined in advance)
 difficult without the knowledge of the
  description of $H,$
\item make the number of trials to determine
  the correct $j$ not too important in order to reduce the cost of
  the computation of $s$.
\end{enumerate}

\medskip

Following \cite{CFS} the Goppa $[2^m,2^m-tm,t]$ codes are a large
class of codes which are compatible with condition 2.
Indeed, for such a code, the proportion of the decodable syndromes is
about $1/t!$ (which is a relatively good proportion). We also have to
choose a relatively small $t.$

\medskip

The  $\{s,j\}$ production process will thus be iterated,
about $t!$ times before finding the correct $j.$ But each iteration
forces to compute $D(h(id_{P}\| j)).$

\medskip

The decoding of the Goppa codes consists of ~: \label{complexity}
\begin{itemize}
\item computing a syndrome ~: $t^{2}m^{2}/2$ binary operations;
\item computing a locator polynomial ~: $6t^{2}m$ binary operations;
\item computing its roots ~: $2t^{2}m^{2}$ binary operations.
\end{itemize}

\medskip

\indent We thus get a total cost for the computation of the prover's
private key of about~: $$t!t^{2}m^{2}(1/2+2+6/m)\ \textrm{binary operations}$$

\medskip

The cost of an attack by decoding thanks to the \textit{split syndrome
  decoding} is estimated to~: $2^{tm(1/2+o(1))}.$

\medskip

The choice of parameters will have to be pertinent enough to
conciliate cost and security. Although less important, some sizes have also to remain reasonable ~: the length
of $\{s,j\},$ the cost of the verification and the size of
$\widetilde{H}$ that is for a Goppa code~: $2^{m}tm.$

\medskip

Following \cite{CFS} we can for example take $t=9$ and $m=16.$
The cost of the signature stays then relatively reasonable for a
security of about $2^{80}.$ The others sizes remain in that context
very acceptable.

\subsection{Practical values}

The big difference when using the parameters associated
to the CFS scheme is that the code used is very long,
$2^{16}$ against $2^9$ for the basic Stern scheme,
it dramatically develops communication costs.

\medskip

In the next table we sum up for the parameters $m=16$, $t=9$
the general parameters of the IBI and IBS schemes.\\

\begin{center}
  \begin{tabular}{|c|c|c|}
    \hline
     public key & private key & matrix size \\
    \hline
    $tm$ & $tm$&$2^mtm$\\
    \hline
    \hline
    144&144& 1 Mo\\
 \hline
  \end{tabular}
\end{center}

\begin{center}
  \begin{tabular}{|c|c|}
    \hline
      communication cost & key generation\\
    \hline
    $\approx 2^m\times \#rounds$ & \\
    \hline
    \hline
    500 Ko (58 rounds) & 1 s\\
 \hline
  \end{tabular}
\end{center}

\begin{center} Practical values for the IBI scheme~: $m=16,t=9$
\end{center}

\noindent

\begin{center}
  \begin{tabular}{|c|c|}
    \hline
      signature length & key generation\\
    \hline
    $\approx 2^m\times \#rounds$ & \\
    \hline
    \hline
     2.2 Mo (280 rounds) & 1 s\\
 \hline
  \end{tabular}
\end{center}

\begin{center} Practical values for the IBS scheme~: $m=16,t=9$
\end{center}

\medskip
\noindent {\bf Reduction of the size of the public matrix~:} At the difference of
a pure signature scheme in which one wants to be able to
sign fast, in our scheme the signature is only computed once
for sending it to the prover, hence the time for
signing may be judged less
determinant and a longer time of signature may be accepted
at the cost of reducing (a little) the parameters of the
public matrix.

\section{Conclusion}
\label{sec:conclusion}

\PARstart In this paper we present and prove secure a new identity-based identification scheme based on error-correcting codes. Our scheme combines two well known schemes by Courtois-Finiasz-Sendrier and Stern. It inherits some of their practical weaknesses, such as large system parameters. Interestingly the new scheme is one of the very few existing alternatives to number theory for identity-based cryptography,  and we hope that it boosts future research on this area.

\bibliographystyle{abbrv}

\end{document}